\renewenvironment{proof}{{\bfseries Proof.}}{\qedsymbol}
\renewcommand\qedsymbol{$\blacksquare$}
\theoremstyle{definition}
\newtheorem{theoremx}{Theorem}
\newcommand{\tp}{^\top}
\DeclareDocumentCommand\diag{}{\opbraces{\operatorname{diag}}}
\DeclareDocumentCommand\rank{}{\opbraces{\operatorname{rank}}}
\DeclareDocumentCommand\tr{}{\opbraces{\operatorname{tr}}}
\DeclareDocumentCommand\vec{}{\opbraces{\operatorname{vec}}}
\def\BibTeX{{\rm B\kern-.05em{\sc i\kern-.025em b}\kern-.08em
    T\kern-.1667em\lower.7ex\hbox{E}\kern-.125emX}}
\begin{document}

\title{Fast Multiagent Formation Stabilization with\\Sparse Universally Rigid Frameworks
\thanks{This work is partially funded by the Sensor AI Lab, under the AI Labs program of Delft University of Technology, and by the RVO (Missie gedreven Onderzoek, Ontwikkeling en Innovatie (MOOI)) Aerial Quadcopter Units for Aquatic Flow Investigation and Nautical Data (AQUAFIND) project.}
}

\author{\IEEEauthorblockN{Zhonggang Li, Geert Leus, and Raj Thilak Rajan}\\
\IEEEauthorblockA{\textit{Signal Processing Systems, Department of Microelectronics, Faculty of EEMCS} \\
\textit{Delft University of Technology}\\
Delft, The Netherlands \\
\{z.li-22, g.j.t.leus, r.t.rajan\}@tudelft.nl}
}

\maketitle

\begin{abstract}
Affine formation control (AFC) is a distributed networked control system that has recently received increasing attention in various applications. AFC is typically achieved using a generalized consensus system where the stress matrix, which encodes the graph structure, is used instead of a graph Laplacian. Universally rigid frameworks (URFs) guarantee the existence of the stress matrix and have thus become the guideline for such a network design. In this work, we propose a convex optimization framework to design the stress matrix for AFC without predefining a rigid graph. We aim to find a resulting network with a reduced number of communication links, but still with a fast convergence speed. We show through simulations that our proposed solutions can yield a more sparse graph, while admitting a faster convergence compared to the state-of-the-art solutions.

\end{abstract}

\begin{IEEEkeywords}
network design, consensus, graph rigidity, formation control, multiagent systems, stress matrix
\end{IEEEkeywords}

\section{Introduction}
Distributed formation control is an essential task for robotic swarming applications \cite{oh2015survey, li2023geometry, marel2022distributed}, where agents use relative information such as interagent distances, relative positions, etc., to achieve and maintain a desired geometric pattern in two- or three-dimensional space. Such a system is typically characterized by a framework \cite{alfakih2013affine} consisting of (a) a configuration, which represents the collection of agent positions forming a geometric pattern, and (b) a graph where the edges denote the communication links for information exchange. Such a networked system shares similarities with sensor networks, relative localization, distributed optimization, etc., where the communication pattern and capacity play a critical role \cite{heusdens2024distributed, rajan2019relative, zheng2009wireless}.

Recently, affine formation control (AFC) has gained increasing attention due to its flexibility in maneuvering while maintaining coordinated motion \cite{zhao2018affine}. Unlike traditional formation control that enforces rigid geometric constraints, AFC allows for controlled scaling, shearing, etc., making it particularly useful for dynamic environments and adaptive mission planning. AFC follows a consensus-like framework, where control inputs are derived as linear combinations of position differences between neighboring agents. As compared to the conventional average consensus using the graph Laplacian, AFC adopts a stress matrix - a generalization of the Laplacian with weighted edges. The universal rigidity of the associated framework guarantees the existence of stress with specific properties that enable AFC. Universally rigid frameworks (URFs) are also applied in network localization \cite{zhu2010universal}, tensegrity frameworks \cite{connelly2022frameworks}, etc.

Interagent communication is essential in distributed formation control, making the reduction of communication load a crucial necessity, due to bandwidth and power constraints of agents. However, a sparse network with few edges is typically less efficient in information exchange, and thus slows down the convergence of consensus-based algorithms including AFC. This has been an ongoing discussion in consensus theory involving the algebraic connectivity \cite{kim2005maximizing, olfati2007consensus}. In AFC, designing the stress matrix with fewer edges while preserving rigidity and convergence speed is challenging. Earlier mathematical literature discusses URF design using special geometries \cite{kelly2014class, connelly2022frameworks}, later followed by numerical stress calculations for AFC applications \cite{lin2015necessary, zhao2018affine}. However, manual graph design becomes impractical for larger networks, and structured geometries like Grünbaum polygons \cite{kelly2014class} can create unbalanced networks that hinder robustness. Thus, generative methods \cite{yang2018constructing, xiao2022framework} that directly design a possibly sparse stress matrix from a given configuration without predefined graphs are preferred. A sparse nullspace reconstruction approach \cite{yang2018constructing} achieves substantial sparsification but fails to maintain convergence from an arbitrary initial configuration in many cases. More recently, mixed-integer semidefinite programming (MISDP) is adopted in \cite{xiao2022framework} for sparse stress design with optimized convergence speed, but raises concerns about optimality and computational complexity.

In this paper, we present an efficient convex optimization framework to design the stress matrix with two key objectives, (a) network sparsification and (b) convergence acceleration, and we further give insights on how to balance these two rather conflicting objectives. In Section \ref{sec: preliminaries}, we introduce the preliminaries of graph rigidity and how stress matrices play a critical role in AFC. We provide a preliminary problem formulation based on the properties of stress and our optimization objective, which turns out to be a non-convex problem. In Section \ref{sec: methodology}, we relax the non-convex problem into a concise and tractable convex problem with insights into the choice of hyperparameters. Finally, we compare our proposed solution with the aforementioned state-of-art using several numerical examples in Section \ref{sec: simulations}, before reaching a conclusion and giving potential future research directions in Section \ref{sec: conclusions}.

\textit{Notations.} Vectors and matrices are represented by lowercase and uppercase boldface letters, respectively such as $\bm{a}$ and $\bm{A}$. Sets and graphs are represented using calligraphic letters, e.g., $\mathcal{A}$. Vectors of length $N$ of all ones and zeros are denoted by $\bm{1}_N$ and $\bm{0}_N$, respectively, and their matrix versions are similarly $\bm{1}_{M\times N}$ and $\bm{0}_{M\times N}$. An identity matrix of size $N$ is denoted by $\bm{I}_N$. The Kronecker product is denoted by $\otimes$, which has relationships with the vectorization operator $\vec(\cdot)$ \cite{petersen2008matrix}. The $\diag(\cdot)$ operator creates a diagonal matrix from a vector and $\tr(\cdot)$ denotes the trace operator. $\lambda_k(\bm{A})$ denotes the $k$-th smallest eigenvalue of a symmetric matrix $\bm{A}$.

\section{Fundamentals}\label{sec: preliminaries}

\subsection{Graphs and Rigidity Theory}
Consider $N$ mobile agents moving in $D$-dimensional Euclidean space where $N\geq D+1$. An undirected graph $\mathcal{G} = (\mathcal{V}, \mathcal{E})$ is used to model a communication network, where the vertices $\mathcal{V} = \qty{1,...,N}$ denote the agents, and the edges $\mathcal{E}\subseteq\mathcal{V}\times \mathcal{V}$ denote the pairwise interactions, e.g., information exchange. We use $N = \qty|\mathcal{V}|$ and $M =\qty|\mathcal{E}|$ as a shorthand notation for the number of vertices and edges, respectively. The set of neighbors of a node $i$ is defined as $\mathcal{N}_i = \{j\in\mathcal{V}: (i,j)\in\mathcal{E}\}$. Let $\bm{p}_i\in\mathbb{R}^D$ be the position of node $i\in\mathcal{V}$ and the collection of all nodes, the \textit{configuration}, is $\bm{P} = \qty[\bm{p}_1,...,\bm{p}_N]\in\mathbb{R}^{D\times N}$. A \textit{generic configuration} \cite{connelly2005generic} has algebraically independent node coordinates, i.e., no geometric constraints among nodes. Classic nongeneric configurations include nodes forming a line, a regular polygon, etc. 

A \textit{framework} $\mathcal{F} = (\mathcal{G},\bm{P})$ is a tuple of the graph and its associated configuration. Intuitively, the rigidity of frameworks can be judged by whether a different configuration exists given the distances between the nodes on the edges of the graph. More formally, from \cite{lin2015necessary}, let $\bm{P}' = \qty[\bm{p}'_1,...,\bm{p}'_N]\in\mathbb{R}^{D\times N}$, then two frameworks $(\mathcal{G},\bm{P})$ and $(\mathcal{G},\bm{P}')$ are equivalent if $\forall \qty(i,j)\in\mathcal{E}, \norm{\bm{p}_i-\bm{p}_j}_2 = \norm{\bm{p}'_i - \bm{p}'_j}_2$, and they are congruent if $\forall i,j\in\mathcal{V}, \norm{\bm{p}_i-\bm{p}_j}_2 = \norm{\bm{p}'_i - \bm{p}'_j}_2$. The global rigidity of a framework $\mathcal{F}$ defined in $\mathbb{R}^D$ requires that all frameworks that are equivalent to $\mathcal{F}$ are also congruent to it. Note that global rigidity is sometimes simply referred to as rigidity. Frameworks that are not globally rigid are \textit{flexible}. The universal rigidity of a framework $\mathcal{F}$ defined in $\mathbb{R}^D$ requires $\mathcal{F}$ to be globally rigid in $\mathbb{R}^{D'}$ for any $D'\geq D$. Fig. \ref{fig: urf examples} presents a few examples to illustrate the rigidities. 

Universal rigidity is algebraically represented by an \textit{equilibrium stress} $\omega_{ij}\in\mathbb{R}$ for every edge $(i,j) \in \mathcal{E}$, leading to a set of weights associated with the edges, which satisfy
\begin{equation}\label{equ: stress def}
    \sum_{(i,j)\in\mathcal{E}}\omega_{ij}(\bm{p}_i-\bm{p}_j) = \bm{0}_D.
\end{equation}
A more compact form of (\ref{equ: stress def}) is $\bm{\Omega}\bm{P}\tp=\bm{0}_{N\times D}$, where $\bm{\Omega}\in\mathbb{R}^{N\times N}$ is called the \textit{stress matrix} and is defined as
\begin{equation}
    [\bm{\Omega}]_{ij}=\begin{cases}
  0,& \text{if } (i,j)\notin\mathcal{E}\\
  -\omega_{ij},& \text{if } i\neq j, (i,j)\in\mathcal{E}\\
  \sum_{k\in\mathcal{N}_i}\omega_{ik}&  \text{if } i=j
\end{cases}.
\end{equation}
Alternatively, the stress matrix can be defined using the graph incidence matrix $\bm{B}\in\mathbb{R}^{N\times M}$ as 
\begin{equation}\label{equ: stress inc}
    \bm{\Omega} = \bm{B}\diag(\bm{\omega})\bm{B}\tp,
\end{equation}
where $\bm{\omega}\in\mathbb{R}^M$ is a vector containing all the equilibrium stresses. Note that $\bm{\Omega}$ reduces to a standard graph Laplacian if $\diag(\bm{\omega})=\bm{I}_M$, i.e., equal weights for the edges. It also has $\bm{1}_N$ in the nullspace like the Laplacian. The following theorem establishes the important properties of $\bm{\Omega}$ related to universal rigidity.
\begin{theoremx}\label{lmm: alt improper rot}
(\textit{Universally Rigid Frameworks and Stress Matrices}) Given a framework $\mathcal{F} = (\mathcal{G},\bm{P})$ with $\bm{P}$ being a generic configuration, $\mathcal{F}$ is universally rigid if and only if there exists a positive semidefinite stress matrix $\bm{\Omega}$ with rank $N-D-1$.
\end{theoremx}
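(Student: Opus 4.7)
The plan is to split the biconditional into a sufficiency part (existence of the PSD stress implies universal rigidity) and a necessity part (universal rigidity of a generic framework yields such a stress), and to handle them very differently. The sufficiency direction is an energy/nullspace argument that I can spell out cleanly; the necessity direction rests on deeper rigidity-theoretic machinery and is where I expect the real obstacle.

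For sufficiency, I would first exploit the defining identities $\bm{\Omega}\bm{P}\tp = \bm{0}_{N\times D}$ and $\bm{\Omega}\bm{1}_N = \bm{0}_N$. Since $\bm{P}$ is generic and $N \geq D+1$, the rows of $\bm{P}$ together with $\bm{1}_N\tp$ are linearly independent, so they span a $(D+1)$-dimensional subspace of the nullspace of $\bm{\Omega}$. The rank hypothesis $\rank(\bm{\Omega}) = N-D-1$ then forces this subspace to be the \emph{entire} nullspace. Next, take any framework $(\mathcal{G},\bm{P}')$ equivalent to $\mathcal{F}$ in some $\mathbb{R}^{D'}$ with $D'\geq D$, and compute the weighted edge energy in two ways, using (\ref{equ: stress inc}) and edge-length preservation:
\begin{equation}
\tr(\bm{P}'\bm{\Omega}\bm{P}'\tp) \;=\; \sum_{(i,j)\in\mathcal{E}} \omega_{ij}\,\norm{\bm{p}'_i-\bm{p}'_j}_2^2 \;=\; \sum_{(i,j)\in\mathcal{E}} \omega_{ij}\,\norm{\bm{p}_i-\bm{p}_j}_2^2 \;=\; \tr(\bm{P}\bm{\Omega}\bm{P}\tp) \;=\; 0.
\end{equation}
Because $\bm{\Omega}\succeq \bm{0}$, this vanishing trace forces $\bm{\Omega}\bm{P}'\tp = \bm{0}$, so the rows of $\bm{P}'$ lie in the nullspace of $\bm{\Omega}$. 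Hence $\bm{P}'$ is an affine image of $\bm{P}$, i.e., $\bm{P}' = \bm{A}\bm{P} + \bm{b}\bm{1}_N\tp$ for some $\bm{A}\in\mathbb{R}^{D'\times D}$ and $\bm{b}\in\mathbb{R}^{D'}$. Substituting back into the edge-length equations and invoking genericity (which guarantees that enough edge differences $\bm{p}_i-\bm{p}_j$ probe all directions in $\mathbb{R}^D$) shows that $\bm{A}\tp\bm{A} = \bm{I}_D$, so the affine map is a Euclidean congruence. This holds for every $D'\geq D$, giving universal rigidity.

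For necessity, I would proceed by contraposition through the semidefinite relaxation of the edge-length realization problem. The feasible set
\begin{equation}
\mathcal{K} \;=\; \bigl\{\bm{X}\in\mathbb{S}^N_+ : X_{ii}+X_{jj}-2X_{ij} = \norm{\bm{p}_i-\bm{p}_j}_2^2,\ (i,j)\in\mathcal{E},\ \bm{X}\bm{1}_N=\bm{0}_N\bigr\}
\end{equation}
contains the Gram matrix of the centered configuration $\bm{P}$, and universal rigidity is equivalent to $\mathcal{K}$ reducing to this single point (equivalently, $\bm{X}^\star$ having rank $D$). A dual optimal multiplier for the constraints of the SDP that maximizes $\tr(\bm{X})$ over $\mathcal{K}$ gives an edge-weight vector $\bm{\omega}$, and complementary slackness together with strict complementarity yields a PSD stress $\bm{\Omega}$ with $\rank(\bm{\Omega}) + \rank(\bm{X}^\star) = N-1$, hence $\rank(\bm{\Omega}) = N-D-1$. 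The key hard step, and the main obstacle of the proof, is establishing that for a \emph{generic} $\bm{P}$ the SDP is non-degenerate so that strict complementarity holds; this uses the genericity hypothesis in an essential way and is the content of classical results such as those of Connelly and Alfakih on universal rigidity of generic frameworks, which I would invoke rather than reprove in full.
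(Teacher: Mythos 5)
The paper gives no proof of this theorem at all --- its ``proof'' is a one-line pointer to the references \cite{gortler2014characterizing}, \cite{zhao2018affine}, \cite{xiao2022framework} --- so your argument is necessarily a different route: in outline it is a faithful reconstruction of what those references actually contain. The sufficiency half (PSD stress of rank $N-D-1$ implies universal rigidity) via the energy identity $\tr(\bm{P}'\bm{\Omega}\bm{P}'\tp)=\sum_{(i,j)\in\mathcal{E}}\omega_{ij}\norm{\bm{p}'_i-\bm{p}'_j}_2^2=0$, the nullspace count, and the reduction to an affine image is Connelly's classical argument and is correct as far as it goes. The necessity half is correctly identified as the genuinely hard direction and correctly attributed to the Gortler--Thurston-type SDP/facial-reduction machinery, which you sensibly invoke rather than reprove.

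Two soft spots deserve flagging. First, in the sufficiency step you assert that genericity alone ``guarantees that enough edge differences probe all directions'' so that $\norm{\bm{A}(\bm{p}_i-\bm{p}_j)}_2=\norm{\bm{p}_i-\bm{p}_j}_2$ on the edges forces $\bm{A}\tp\bm{A}=\bm{I}_D$. As stated this is false: what is actually needed is that the rank-one matrices $(\bm{p}_i-\bm{p}_j)(\bm{p}_i-\bm{p}_j)\tp$ over the edge set span the $\binom{D+1}{2}$-dimensional space of symmetric matrices (equivalently, that the edge directions do not lie on a conic at infinity), and a generic configuration on a sparse graph need not satisfy this. The correct patch is either to add the conic condition as a hypothesis (Connelly's super-stability theorem) or to run the Alfakih--Ye general-position argument, which uses the existence of the max-rank stress itself to rule out nontrivial affine flexes; genericity enters through general position, not through a direct spanning claim. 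Second, in the necessity direction, strict complementarity of the trace-maximizing SDP is not automatic even for generic data; the Gortler--Thurston proof is an iterated facial reduction producing a sequence of stresses, and genericity is what allows that iteration to collapse to a single PSD stress of rank $N-D-1$. Neither issue changes the conclusion --- both are resolved in the literature you (and the paper) cite --- but the one-sentence glosses are exactly where a careful reader would push back.
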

\begin{proof}
    See \cite{gortler2014characterizing}, \cite{zhao2018affine}, \cite{xiao2022framework}.
\end{proof}
\noindent 

\noindent Although a generic configuration is generally assumed for mathematical guarantees, nongeneric geometries are more interesting for formation control applications. It is worth mentioning that our stress design approach works for both generic and nongeneric configurations.

\begin{figure}[t]
	\centering	

        \subfloat[\scriptsize flexible]{\raisebox{0ex}
		{\includegraphics[width=0.13\textwidth]{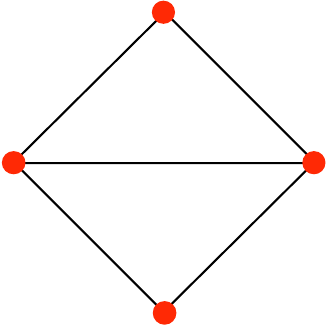}}%
	}
        \hspace{3mm}
	\subfloat[\scriptsize globally rigid]{\raisebox{0ex}
		{\includegraphics[width=0.13\textwidth]{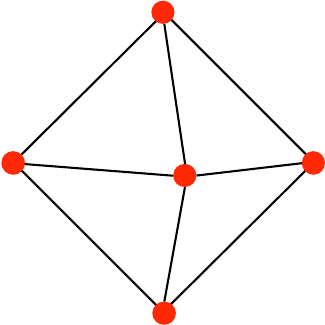}}%
	}
    \hspace{3mm}
        \subfloat[\scriptsize universally rigid]{\raisebox{0ex}
    		{\includegraphics[width=0.13\textwidth]{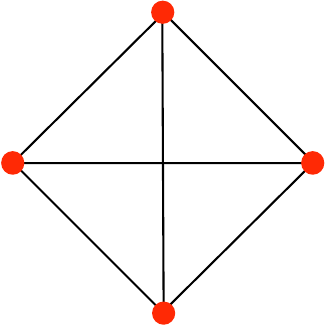}}%
    	}

	\caption{\small Examples of frameworks in $\mathbb{R}^2$ with increasing rigidity \cite{lin2015necessary}}
	\label{fig: urf examples}
		
\end{figure}

\subsection{Consensus-Based Formation Control}
Let $\bm{z} = \qty[\bm{z}_1\tp,...,\bm{z}_N\tp]\tp\in\mathbb{R}^{DN}$, where $\bm{z}_i\in\mathbb{R}^D, \forall i\in\mathcal{V}$, be the true positions of agents that are governed by the single-integrator dynamics $\dot{\bm{z}}_i = \bm{u}_i$ where $\bm{u}_i$ is a velocity control input to be computed by agent $i$. Furthermore, recall the stress matrix from (\ref{equ: stress def}) that satisfies $\bm{\Omega}\bm{P}\tp=\bm{0}$, which can also be written as 
\begin{equation}\label{equ: equilibrium config}
\qty(\bm{\Omega}\otimes\bm{I}_D)\bm{p}=\bm{0},   
\end{equation}
where $\bm{p} = \vec(\bm{P})$ using the Kronecker-vectorization property. As such, a consensus-like control law \cite{lin2015necessary, zhao2018affine} can be designed as $\bm{u}_i = -\sum_{j\in\mathcal{N}_i}\omega_{ij}(\bm{z}_i-\bm{z}_j), \forall i\in\mathcal{V}$, such that the collective dynamics lead to a linear time-invariant system
\begin{equation}\label{equ: global dynamics}
    \dot{\bm{z}} = 
-\qty(\bm{\Omega}\otimes\bm{I}_D)\bm{z}.
\end{equation}
It is straightforward to see from (\ref{equ: equilibrium config}) that $\bm{p}$ is an equilibrium of system (\ref{equ: global dynamics}), which is allowed by $\bm{\Omega}$ that has a $(D+1)$-dimensional nullspace. This is advantageous over the conventional average consensus system, where the graph Laplacian is used because the graph Laplacian has only a $1$-dimensional nullspace corresponding to the $\bm{1}$ vector, i.e., only configurations with all nodes in the same location can be in equilibrium.

Moreover, since $\bm{\Omega}$ is positive semidefinite, system (\ref{equ: global dynamics}) is globally and exponentially stable. We then claim that a formation defined by the framework $\mathcal{F} = (\mathcal{G},\bm{P})$ is stabilizable using system (\ref{equ: global dynamics}), i.e., the formation asymptotically converges to the solution space containing the equilibrium $\bm{p}$ given any initialization. Note that the targeted equilibrium $\bm{p}$ can be reached up to an affine transformation (hence its usefulness for AFC), but can also be uniquely determined given a few anchor nodes (which are commonly named leaders, e.g., in \cite{zhao2018affine}).

\subsection{Problem Formulation}
As discussed, the core element of the formation control system (\ref{equ: global dynamics}) is the stress matrix with the following properties:
\begin{enumerate}
    \item $\bm{\Omega}$ is positive semidefinite (PSD);
    \item $\bm{\Omega}$ has rank $N-D-1$;
    \item $\bm{\Omega}$ has rows of configuration $\bm{P}$ and $\bm{1}_N$ in the nullspace. 
\end{enumerate}
Given a configuration $\bm{P}$, we seek a sparse stress matrix satisfying the above properties representing a loosely connected graph, while maximizing the system's convergence rate. In consensus theory, the second smallest eigenvalue of the Laplacian governs convergence speed \cite{olfati2007consensus}, as it determines the slowest decaying component. Prior work has aimed to maximize this eigenvalue for faster convergence \cite{kim2005maximizing}. Similarly, for stress matrices, we aim to maximize the smallest nonzero eigenvalue, i.e., the ($D+2$)-th smallest eigenvalue of ${\bm{\Omega}}$ or $\lambda_{D+2}({\bm{\Omega}})$. We show later via numerical examples that this goal can be traded off for graph sparsification.

\section{Network Topology Sparsification}\label{sec: methodology}
In this section, we first propose a preliminary formulation directly related to the objectives and the constraints, which turns out to be a non-convex problem. We then focus on convexifying the formulation into a tractable convex optimization and give insights into the formulation. Recall from (\ref{equ: stress inc}) that the stress matrix can be constructed using the graph incidence matrix $\bm{B}$ given a stress vector $\bm{\omega}$. We initialize the graph as a complete graph with an incidence matrix $\bar{\bm{B}}\in\mathbb{R}^{N\times\bar{M}}$ where there are $\bar{M} = \frac{N(N-1)}{2}$ edges. We then aim to find a sparse vector $\bar{\bm{\omega}}\in\mathbb{R}^{\bar{M}}$ such that $\bm{\Omega}$ is sparse using (\ref{equ: stress inc}), but using the notation with bars. The effective number of edges from the sparse vector $\bm{\Omega}$ will be $M = \norm{\bar{\bm{\omega}}}_0$. Hence we pose the following problem $\mathcal{P}_0$:

\begin{subequations} \label{equ: primitive formulation}
  \begin{align}
  & \mathcal{P}_0: \quad \underset{\bar{\bm{\omega}},\bm{\Omega}}{\text{minimize}}
  && \norm{\bar{\bm{\omega}}}_0 - \alpha\lambda_{D+2}(\bm{\Omega}) \label{equ: obj pri} \\
  & \quad \quad \quad \text{subject to}
  && \bm{\Omega}=\bar{\bm{B}}\diag(\bar{\bm{\omega}})\bar{\bm{B}}\tp \label{equ: equality cstr pri}\\
  &&& \rank(\bm{\Omega}) = N-D-1 \label{equ: rank cstr pri}\\
  &&& \bm{\Omega}\succeq 0 \label{equ: PSD cstr pri}\\
  &&& \bm{\Omega}\bar{\bm{P}}\tp=\bm{0} \label{equ: null cstr pri}
  \end{align}
\end{subequations} where $\alpha$ is a weighting parameter, $\lambda_{D+2}(\bm{\Omega})$ is the smallest nonzero eigenvalue of $\bm{\Omega}$, and $\bar{\bm{P}} = [\bm{P}\tp, \bm{1}_N]\tp\in\mathbb{R}^{(D+1)\times N}$. As can be observed, $\mathcal{P}_0$ is difficult to solve due to the L0-norm and the rank constraint. In the next section, we convexify this problem to a tractable semidefinite program (SDP).

\subsection{Convexifying the Constraints}
Assuming a full-rank $\bar{\bm{P}}$, we let  $\bm{Q}\in\mathbb{R}^{N\times (N-D-1)}$ span the kernel of $\bar{\bm{P}}$ with orthonormal columns. Then, (\ref{equ: rank cstr pri}), (\ref{equ: PSD cstr pri}), and (\ref{equ: null cstr pri}) imply that $\bm{Q}\tp\bm{\Omega}\bm{Q}\in\mathbb{R}^{(N-D-1)\times(N-D-1)}$ is a positive definite (PD) matrix, since the null-subspace of $\bm{\Omega}$ is projected out by $\bm{Q}$, i.e.,  
$\bm{Q}\tp\bm{\Omega}\bm{Q}\succ 0$. Further substituting (\ref{equ: stress inc}), we have $\bm{\Psi}\diag(\bar{\bm{\omega}})\bm{\Psi}\tp\succ 0$, where $\bm{\Psi} = \bm{Q}\tp\bar{\bm{B}}$.


Substituting (\ref{equ: stress inc}) into  (\ref{equ: null cstr pri}), we have $\bar{\bm{P}}\bar{\bm{B}}\diag\qty(\bar{\bm{\omega}})\bar{\bm{B}}\tp = \bar{\bm{P}}\bar{\bm{B}}\diag\qty(\bar{\bm{\omega}})\qty[\bar{\bm{b}}_1,...,\bar{\bm{b}}_N] = \bm{0}$ where $\bar{\bm{b}}_i\in\mathbb{R}^{\bar{M}}, \forall i\in\mathcal{V}$ is the $i$-th column of $\bar{\bm{B}}\tp$. Observe that $\bar{\bm{P}}\bar{\bm{B}}\diag\qty(\bar{\bm{\omega}})\bar{\bm{b}}_i = \bar{\bm{P}}\bar{\bm{B}}\diag\qty(\bar{\bm{b}}_i)\bar{\bm{\omega}}=\bm{0}$. As such, we can construct a matrix $\bm{E}\in\mathbb{R}^{N(D+1)\times \bar{M}}$ with the structure
\begin{equation}\label{equ: Zhao2018E}
    \bm{E} = \mqty[\bar{\bm{P}}\bar{\bm{B}}\diag\qty(\bar{\bm{b}}_1) \\ \vdots \\ \bar{\bm{P}}\bar{\bm{B}}\diag\qty(\bar{\bm{b}}_N)],
\end{equation}
such that $\bm{E}\bar{\bm{\omega}} = \bm{0}$, which can replace constraint (\ref{equ: null cstr pri}).

\subsection{Eigenvalue Maximization}
We now give an explicit expression for the maximization of $\lambda_{D+2}(\bm{\Omega})$. Denoting $\bm{\psi} = \diag(\bm{\Psi}\tp\bm{\Psi})$, we show that maximizing $\bm{\psi}\tp\bar{\bm{\omega}}$ with a bounded $\norm{\bm{\Omega}}_2$ is equivalent to the eigenvalue maximization problem. To start, recollecting that $\bm{Q}\tp\bm{\Omega}\bm{Q} = \bm{\Psi}\diag(\bar{\bm{\omega}})\bm{\Psi}\tp$, we obtain
\begin{align}\label{equ: trace linear}
    \tr\qty(\bm{Q}\tp\bm{\Omega}\bm{Q}) 
  &=  \tr\qty(\bm{\Psi}\diag(\bar{\bm{\omega}})\bm{\Psi}\tp) = \tr\qty(\bm{\Psi}\tp\bm{\Psi}\diag(\bar{\bm{\omega}})) \\ \notag
  &= \tr\qty(\diag(\bm{\Psi}\tp\bm{\Psi})\bar{\bm{\omega}}) = \bm{\psi}\tp\bar{\bm{\omega}}.
\end{align}
Hence, the sum of the eigenvalues of the PD matrix $\bm{Q}\tp\bm{\Omega}\bm{Q}$, or equivalently the PSD matrix $\bm{\Omega}$ (since $\bm{Q}$ has orthonormal columns), can be represented by $\bm{\psi}\tp\bar{\bm{\omega}}$ where $\bm{\psi}$ is known. Then if we use an additional constraint $\norm{\bm{\Omega}}_2\leq\beta$ where we limit the largest eigenvalue of $\bm{\Omega}$ to $\beta>0$, $\lambda_{D+2}(\bm{\Omega})$ is maximized when $\lambda_{D+2}(\bm{\Omega})=...=\lambda_{\text{max}}(\bm{\Omega}) = \beta$ and the condition number $\kappa(\bm{\Omega})$ is minimized to $1$ at the same time. This eigenvalue maximization is guaranteed when there are no other objectives and constraints and will play as a relaxation under our proposed constrained formulation. It is worth mentioning that trace regularization is also commonly seen in network design literature such as \cite{kim2005maximizing}. Additionally, although we do not emphasize the minimization of the condition number in this work, it is shown in \cite{xiao2022framework} that such minimization can improve the robustness of AFC against time delays.


\subsection{Proposed Framework}
A common convex relaxation is that of the L0-norm to an L1-norm for sparsity, i.e., $\norm{\bar{\bm{\omega}}}_0$ to $\norm{\bar{\bm{\omega}}}_1$ in the objective. Finally, by combining all the convexified constraints and the objectives above, we present our proposed formulation $\mathcal{P}_1$
\begin{subequations} \label{equ: main formulation}
  \begin{align}
  & \mathcal{P}_1: \quad \underset{\bar{\bm{\omega}}}{\text{minimize}}
  && \norm{\bar{\bm{\omega}}}_1 - \alpha\bm{\psi}\tp\bar{\bm{\omega}} \label{equ: obj main} \\
  & \quad \quad \quad \text{subject to}
  &&\bm{\Psi}\diag(\bar{\bm{\omega}})\bm{\Psi}\tp - \gamma\bm{I}_{N-D-1} \succeq 0 \label{equ: PSD main}\\
  &&& \norm{\bar{\bm{B}}\diag(\bar{\bm{\omega}})\bar{\bm{B}}\tp}_2 \leq \beta \label{equ: 2norm main}\\
  &&& \bm{E}\bar{\bm{\omega}} = \bm{0} \label{equ: null main}
  \end{align}
\end{subequations}
where $\alpha>0$ is the weighting parameter of the objectives and $\beta>0$ upper bounds the spectral norm of $\bm{\Omega}$. We also set a lower bound $\gamma>0$ for (\ref{equ: PSD main}) for two explicit reasons, (a) a strict positive definite constraint $\bm{\Psi}\diag(\bar{\bm{\omega}})\bm{\Psi}\tp\succ 0$ is numerically infeasible, and (b) when the L1 term is dominating, we want to avoid obtaining the trivial solution $\bar{\bm{\omega}}=\bm{0}$. 

As a summary of the proposed formulation, with the help of (\ref{equ: null cstr pri}), constraints (\ref{equ: rank cstr pri}) and (\ref{equ: PSD cstr pri}) in $\mathcal{P}_0$ are convexified using (\ref{equ: PSD main}), and (\ref{equ: null cstr pri}) itself is rewritten with (\ref{equ: null main}) using only the variable $\bar{\bm{\omega}}$. Constraint (\ref{equ: 2norm main}) combined with the objective (\ref{equ: trace linear}) achieves the eigenvalue maximization. We now investigate the relationships among these hyperparameters in the following section.

\subsection{Choice of Hyperparameters}
The need for $\beta > \gamma$ is straightforward to ensure feasibility since they represent the largest and smallest eigenvalues, respectively. Additionally, $\beta$ and $\gamma$ should be sufficiently apart for a relatively large feasible region. The choice of $\alpha$ determines the objective function, which contains a sparsity term $\norm{\bar{\bm{\omega}}}_1$ (geometrically a cone) and a convergence term $-\alpha\bm{\psi}\tp\bar{\bm{\omega}}$ (a hyperplane). If the convergence term is dominant, i.e., $\alpha$ is large, the hyperplane will unfold the cone such that its global minimum vanishes, in which case the solution is minimized at the boundary of the feasible region which is not necessarily a sparse solution. Mathematically, to keep the global minimum of the objective function, $\bm{0}_{\bar{M}}$ should be in the subdifferential of the objective function, i.e.,
\begin{equation}
    \bm{0}_{\bar{M}} \in \partial\qty(\norm{\bar{\bm{\omega}}}_1 - \alpha\bm{\psi}\tp\bar{\bm{\omega}}),
\end{equation}
which directly yields that all elements of $\alpha\bm{\psi}$ are in the range $(-1,1)$. Recalling that all elements of $\bm{\psi}$ are nonnegative from the definition $\bm{\psi} = \diag(\bm{\Psi}\tp\bm{\Psi})$ and $\alpha >0$, we conclude that the choice of $\alpha$ that entails a sparse solution is
\begin{equation}\label{equ: range alpha}
    0 < \alpha < \frac{1}{\norm{\bm{\psi}}_\infty},
\end{equation}
where the infinity norm is the largest value of $\bm{\psi}$. Since a larger $\alpha$ promotes faster convergence, we recommend using the upper bound value of (\ref{equ: range alpha}) before fine-tuning. If $\alpha$ is chosen substantially beyond the range (\ref{equ: range alpha}), then the sparsity is traded off for the convergence rate, which we will illustrate in the next section.

\section{Simulations}\label{sec: simulations}
In this section, we validate our proposed framework through several examples, comparing its performance with state-of-the-art methods such as Lin et al. \cite{lin2015necessary}, Xiao et al. \cite{xiao2022framework}, and Yang et al. \cite{yang2018constructing}. We demonstrate the effect of varying the hyperparameter $\alpha$, with a fixed $\beta=1$ and $\gamma=0.1$. Since the approach in \cite{lin2015necessary} is not generative and requires a predefined graph, we first apply some sparse graph construction technique \cite{kelly2014class, connelly2022frameworks} 
to generate an appropriate structure before designing the stress, which is a widely adopted pipeline for AFC works. It is obvious from (\ref{equ: stress def}) and (\ref{equ: equilibrium config}) that stress has a scaling ambiguity, hence a scalar gain could amplify the smallest nonzero eigenvalue for an accelerated convergence. However, for a fair comparison of the convergence rate, we normalize all the acquired stress matrices for a maximum eigenvalue of $1$. The code for the simulations is available online \footnote{https://github.com/asil-lab/zli-sparse-urf}, in which we used CVXPY \cite{diamond2016cvxpy} to solve the proposed convex optimization.
\begin{figure*}[t]
	\centering	

        \subfloat[\scriptsize configuration]{\raisebox{0ex}
		{\includegraphics[width=0.14\textwidth]{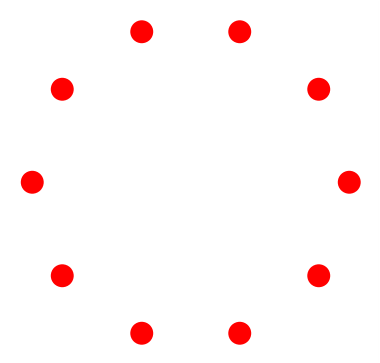}}%
	}
        \hspace{1mm}
	\subfloat[\scriptsize Yang et al.]{\raisebox{0ex}
		{\includegraphics[width=0.14\textwidth]{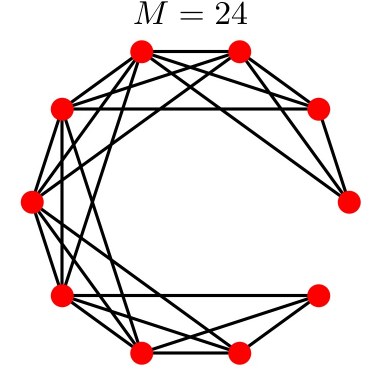}}%
	}
    \hspace{1mm}
        \subfloat[\scriptsize 
        Lin et al.]{\raisebox{0ex}
    		{\includegraphics[width=0.14\textwidth]{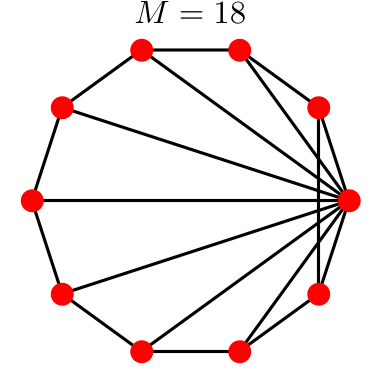}}%
    	}
        \hspace{1mm}
        \subfloat[\scriptsize proposed, $\alpha=0.5$]{\raisebox{0ex}
    		{\includegraphics[width=0.14\textwidth]{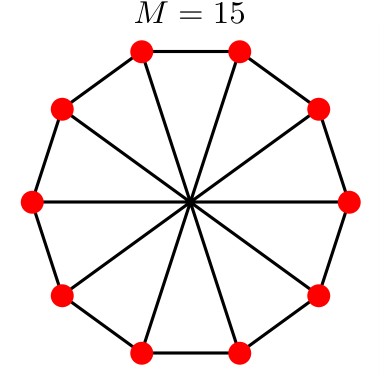}}%
    	}
        \hspace{1mm}
        \subfloat[\scriptsize proposed, $\alpha=1.5$]{\raisebox{0ex}
    		{\includegraphics[width=0.14\textwidth]{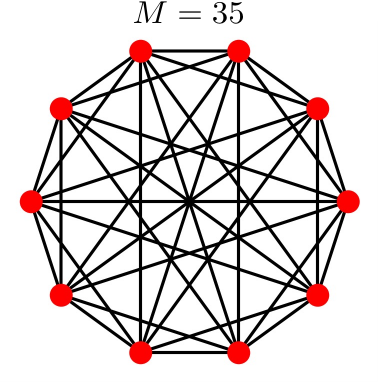}}%
    	}
        \hspace{1mm}
        \subfloat[\scriptsize proposed, $\alpha=5$]{\raisebox{0ex}
    		{\includegraphics[width=0.14\textwidth]{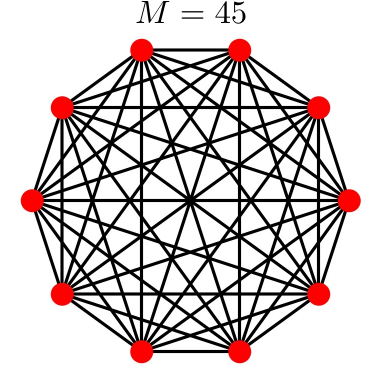}}%
    	}

	\caption{Resulting graphs of our proposed solutions compared with the state-of-the-art. A 10-node regular polygon has a maximum of $\bar{M} = 45$ edges. The recommended value of $\alpha$ for a sparse solution is $1/\norm{\bm{\psi}}_\infty = 0.51$. 
    The structure from \cite{kelly2014class} is used for
    Lin et al.}
	\label{fig: sparse circular}
		
\end{figure*}
Our main test case is a nongeneric regular polygon with 10 nodes in $\mathbb{R}^2$ shown in Fig. \ref{fig: sparse circular}(a), a common geometric pattern in, e.g., antenna arrays for maximizing aperture. The optimized graphs are shown in Fig. \ref{fig: sparse circular} where all methods yield a sparsified solution compared with a complete graph, among which our proposed solution offers the least required edges under the right $\alpha$ setting (\ref{equ: range alpha}). Also, the resulting graph becomes denser with a larger $\alpha$, which substantiates our previous discussion that the optimal solution will fall on the border of the feasible region. 
Besides, it is also consistent with the intuition that the sparsity term in (\ref{equ: obj main}) is less dominant with a larger $\alpha$, so a less sparse solution is achieved. Another observation from the results is that the proposed framework gives symmetric solutions as compared to the state-of-the-art works, which is a key benefit for robustness and for balancing the network load. This is because the proposed framework does not assume any sparsity pattern like \cite{yang2018constructing} or \cite{kelly2014class}, although symmetry is not explicitly enforced. 

\begin{figure}[t]
    \centering%
    \includegraphics[width=0.4\linewidth]{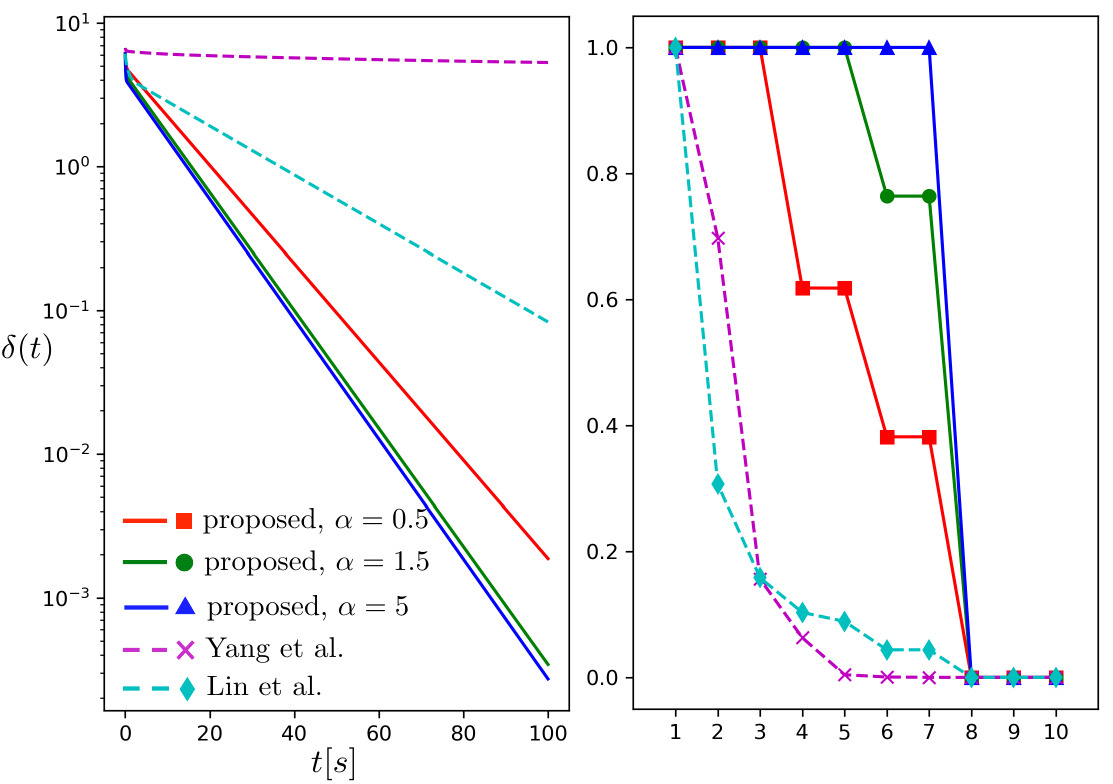}
    \caption{The convergence of the formation control system (left). The tracking error is $\delta(t) = \norm{\bm{z}(t) - \bm{p}}_2$. The normalized eigenvalues of the stress matrices (right). The 7th eigenvalue corresponds to the error convergence speed. Lin et al. is designed over graph Fig. \ref{fig: sparse circular} (c).}%
    \label{fig: circular eigs and cvg}%
\end{figure}

The convergence of our proposed solution is shown in Fig.~\ref{fig: circular eigs and cvg}. We compare the eigenvalues of the acquired stress matrix with the resulting convergence of formation control. When $\alpha$ is tuned to a larger value, the smallest nonzero eigenvalue and the convergence speed are promoted. The extreme case is $\alpha=5$, where all nonzero eigenvalues are maximized to $1$ at the cost of a full connection among nodes in Fig. \ref{fig: sparse circular}(f). This confirms that the denser the connections, the more efficiently information is spread in the network hence a faster convergence. It is worth noting that our most sparse solution, i.e., $\alpha=0.5$, still outperforms the existing works.

The generic example from \cite{xiao2022framework} is shown in Fig. \ref{fig: sparse xiao} where there are 6 nodes in $\mathbb{R}^2$. All compared methods present a reduced number of edges from a maximum of $\bar{M} = 15$ and the proposed technique can reach the lower bound $\underline{M}=2N-2 = 10$ for generic configurations \cite{kelly2014class}. Additionally, the proposed solution entails the highest $\lambda_4(\bm{\Omega})$, which is the smallest nonzero eigenvalue.

\section{Conclusion}\label{sec: conclusions}
In this work, we formulated the stress matrix design for a universally rigid framework as a convex optimization problem, aiming to minimize the number of edges required for network connectivity while enhancing convergence speed. We provided insights into the selection of hyperparameters, and numerical results demonstrated that our method outperforms existing approaches in both sparsity and convergence speed across our test cases. In future work, we plan to benchmark our solution across broader scenarios with more evaluation criteria such as control energy. We also aim to incorporate additional constraints, such as distance constraints and balanced load distribution, to refine our network design further.

\begin{figure}[t]
    \centering
    \subfloat[\scriptsize The resulting graphs.]{\raisebox{1mm}{ 
        \includegraphics[width=0.25\textwidth]{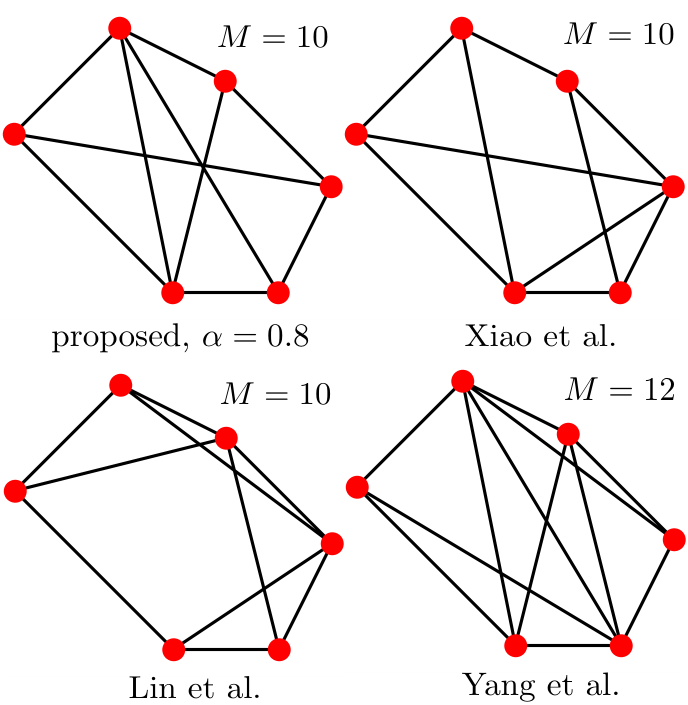}
    }}
    \hspace{0pt} 
    \subfloat[\scriptsize Normalized eigenvalues of the acquired stress matrices.]{
        \includegraphics[width=0.18\textwidth]{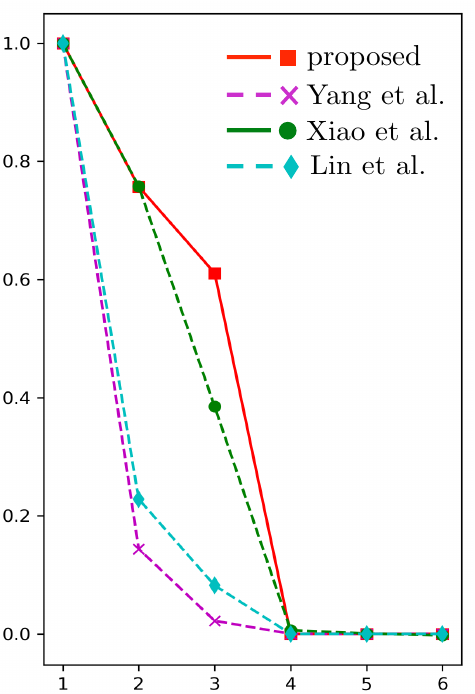}
    }
    
    \caption{Comparison of results using the generic configuration from \cite{xiao2022framework}. A structure from \cite{connelly2022frameworks} is used for Lin et al.}
    \label{fig: sparse xiao}
\end{figure}
\bibliographystyle{IEEEtran}
\bibliography{refs}

\end{document}